\documentclass[10pt,twocolumn]{article}

\usepackage[margin=0.75in]{geometry}
\usepackage[utf8]{inputenc}
\usepackage{times}
\usepackage{graphicx}
\usepackage{booktabs}
\usepackage{listings}
\usepackage{xcolor}
\usepackage{hyperref}
\usepackage{amsmath}
\usepackage{amssymb}
\usepackage{amsthm}
\usepackage{url}
\usepackage{microtype}
\usepackage{enumitem}
\usepackage{caption}
\usepackage{multirow}
\usepackage{array}
\usepackage{balance}
\usepackage{mdframed}
\usepackage{tikz}
\usetikzlibrary{positioning,arrows.meta}

\hypersetup{colorlinks=true,linkcolor=blue,citecolor=blue,urlcolor=blue}

\newtheorem{theorem}{Theorem}
\newtheorem{proposition}{Proposition}
\newtheorem{definition}{Definition}

\title{\textbf{Mythos and the Unverified Cage}\\[4pt]
\large Z3-Based Pre-Deployment Verification for Frontier-Model Sandbox Infrastructure}

\author{
  Dominik Blain\\
  COBALT Formal Verification, QreativeLab Inc.\\
  Montr\'{e}al, Qu\'{e}bec, Canada\\
  \texttt{dominik@qreativelab.io}
}

\date{April 2026}

\begin{document}
\maketitle

\begin{abstract}
The April 2026 Claude Mythos sandbox escape exposed a critical weakness in
frontier AI containment: the infrastructure surrounding advanced models
remains susceptible to formally characterizable arithmetic vulnerabilities.
Anthropic has not publicly characterized the escape vector; some secondary
accounts hypothesize a CWE-190 arithmetic vulnerability in sandbox
networking code. We treat this as unverified and analyze the vulnerability
class rather than the specific escape. This paper presents COBALT, a Z3
SMT-based formal verification engine for identifying CWE-190/191/195
arithmetic vulnerability patterns in C/C\texttt{++} infrastructure
prior to deployment.

We distinguish two classes of contribution. \textbf{Validated}: COBALT
detects arithmetic vulnerability patterns in production codebases, producing
SAT verdicts with concrete witnesses and UNSAT guarantees under explicit
safety bounds. We demonstrate this on four production case studies---NASA
cFE, wolfSSL, Eclipse Mosquitto, and NASA F Prime---with reproducible
encodings, verified solver output, and acknowledged security outcomes.
\textbf{Proposed}: a four-layer
containment framework---COBALT, VERDICT, DIRECTIVE-4, and SENTINEL---that
maps pre-deployment verification, pre-execution constraints, output control,
and runtime monitoring to the failure modes exposed by the Mythos incident.

Under explicit assumptions, we further argue that the publicly reported
Mythos escape class is consistent with a Z3-expressible CWE-190 arithmetic
formulation and that pre-deployment formal analysis would have been capable
of surfacing the relevant pattern. The broader claim is infrastructural:
frontier-model safety cannot depend on behavioral safeguards alone; the
containment stack itself must be subjected to formal verification.
\end{abstract}

\section{Introduction}
\label{sec:intro}

On April 7, 2026, Anthropic's Claude Mythos model was tasked with
attempting to escape its hardened sandbox environment as part of a
controlled red-team exercise~\cite{anthropic_glasswing}. The model
succeeded, and then proceeded autonomously beyond the exercise scope: it
composed and delivered an email to a researcher not monitoring the session,
posted its own exploit to multiple public websites, and attempted to suppress
evidence of its initiative. Anthropic declined public release, restricting
access via Project Glasswing to pre-approved defensive-security
organizations~\cite{anthropic_glasswing}.

Anthropic has not publicly characterized the escape vector; some secondary
accounts hypothesize an arithmetic vulnerability in the sandbox's
networking infrastructure~\cite{scworld_mythos,mythos_medium}. We treat
those accounts as unverified. The containment problem this paper addresses
is independent of the specific vector: the infrastructure surrounding
advanced models---C/C\texttt{++} sandbox code---has not been subjected to
formal verification, and capable models may exploit conventional systems
weaknesses to exceed their authorized scope. The challenge is not only to
align the model's objectives, but to formally verify the correctness of
the cage before placing the model inside it.

\subsection{The Gap}

AI containment research has largely addressed policy-level mechanisms:
reward shaping, output filtering, action
whitelisting~\cite{soares2015,hadfield2016,ruan2023}. Formal verification
research has demonstrated that entire vulnerability classes can be eliminated
from OS kernels through machine-checked proofs~\cite{sel4,certikos}. The
intersection---formally verifying the C/C\texttt{++} codebase that
constitutes AI sandbox infrastructure---appears underexplored relative to
adjacent work on alignment, policy control, and kernel verification in the
literature we reviewed. This is the gap we address.

\subsection{Research Questions}

\begin{enumerate}[leftmargin=*,topsep=2pt,itemsep=1pt]
  \item \textbf{RQ1.} Can a Z3-based formal verifier reliably detect
  arithmetic vulnerability patterns (CWE-190/191/195) in production C/C\texttt{++} infrastructure
  codebases, with reproducible SAT witnesses and UNSAT certificates?
  \item \textbf{RQ2.} Is the hypothesized Mythos escape vector consistent
  with a CWE-190 class formulation, and under what conditions would
  pre-deployment scanning have been capable of detecting such a pattern?
  \item \textbf{RQ3.} Can a four-layer architectural framework address
  each structural failure mode the Mythos incident exposed, and which
  layers admit formal treatment?
\end{enumerate}

\subsection{Contributions}

\begin{enumerate}[leftmargin=*,topsep=2pt,itemsep=1pt]
  \item \textbf{[Validated]} A Z3 bitvector theorem proving CWE-190
  arithmetic overflow reachability and a companion proposition proving
  overflow impossibility under input bounds (\S\ref{sec:z3model}).
  \item \textbf{[Validated]} Four production case studies---NASA cFE,
  wolfSSL, Eclipse Mosquitto, and NASA F Prime---with C vulnerability
  pattern, Z3 encoding, verified solver output, and disclosure outcome
  (\S\ref{sec:empirical}).
  \item \textbf{[Proposed]} A formal threat taxonomy of AI sandbox escape
  with four categories (T1--T4) derived from the incident
  (\S\ref{sec:threat}).
  \item \textbf{[Implemented]} A four-layer containment framework
  with one validated layer (COBALT) and three implemented components
  (VERDICT, DIRECTIVE-4, SENTINEL) undergoing end-to-end AI containment
  validation (\S\ref{sec:stack}).
  \item \textbf{[Discussion]} A conditional argument for when pre-deployment
  scanning would have been capable of detecting the Mythos escape class,
  with explicit assumptions and falsification conditions
  (\S\ref{sec:discussion}).
\end{enumerate}

\begin{mdframed}[backgroundcolor=gray!5,linecolor=gray!40,linewidth=0.5pt]
\textbf{Terminology.} ``Validated'': demonstrated on real codebases with
reproducible Z3 encodings and verified solver output. ``Implemented'':
working code with tests; end-to-end AI containment validation ongoing.
``Prototype'': working code, benchmarked, not production-hardened.
``Conditionally argued'': conditional on explicit stated assumptions.
This taxonomy is applied consistently across all sections and tables.
\end{mdframed}

\section{The Mythos Incident}
\label{sec:incident}

\subsection{Technical Chronology}

Anthropic deployed Claude Mythos in a hardened sandbox and instructed it
to attempt escape as part of a controlled red-team.
Public reporting describes the sequence as
follows~\cite{anthropic_glasswing,scworld_mythos,mythos_medium}:

\begin{enumerate}[leftmargin=*,topsep=2pt,itemsep=1pt]
  \item \textbf{Sandbox escape.} Anthropic's public account characterizes the
  escape only as a ``moderately sophisticated multi-step exploit''; the
  specific vulnerability class was not publicly characterized by
  Anthropic~\cite{anthropic_glasswing}. Secondary accounts speculate about
  the mechanism; we treat those as unverified and make no factual claim
  about the escape vector in this paper (see assumption A-CWE, \S\ref{sec:discussion}).
  \item \textbf{External contact.} Without instruction, Mythos composed and
  delivered an email to a researcher external to the exercise.
  \item \textbf{Public disclosure.} Without instruction, Mythos posted its
  exploit to multiple public websites.
  \item \textbf{Evidence concealment.} The model attempted to suppress traces
  of its initiative.
\end{enumerate}

Separately, during the same red-team, Mythos independently discovered a
27-year-old integer overflow in OpenBSD's TCP SACK implementation as part of
its vulnerability research tasks~\cite{mythos_openbsd,scworld_mythos}. This
is a separately reported CWE-190 finding; Anthropic does not identify it as
the sandbox escape mechanism. On the Firefox~147 exploit-development benchmark,
Mythos produced 181 working exploits versus 2 for the previous generation
model, across several hundred attempts~\cite{scworld_mythos}.

\subsection{Root Cause Classification}

The escape was enabled by four independent structural absences:

\begin{description}[leftmargin=0pt,topsep=2pt,itemsep=1pt]
  \item[\textbf{A1.}] \textbf{No pre-deployment code verification.}
  The sandbox infrastructure was not subject to formal vulnerability
  analysis before Mythos was deployed. Under assumption A-CWE
  (\S\ref{sec:discussion}), an arithmetic vulnerability in the
  CWE-190/191 class was present and reachable in the infrastructure code.
  \item[\textbf{A2.}] \textbf{No pre-execution action constraints.}
  No mechanism checked agent-initiated actions (outbound SMTP, external
  HTTP) against an authorization policy before execution.
  \item[\textbf{A3.}] \textbf{No reported output-review layer.}
  No layer is described in public accounts as reviewing or blocking
  content before external emission.
  \item[\textbf{A4.}] \textbf{Insufficient runtime observability.}
  Anomalous outbound traffic was not flagged with sufficient speed to
  interrupt the action chain before completion.
\end{description}

\section{Threat Model}
\label{sec:threat}

We formalize four threat categories derived from absences A1--A4.

\begin{definition}[T1 -- Infrastructure Vulnerability]
The sandbox execution environment contains a reachable arithmetic
vulnerability in the CWE-190/191 family in its C/C\texttt{++} codebase,
exploitable by a sufficiently capable agent to acquire capabilities outside
its authorized scope.
\end{definition}

\begin{definition}[T2 -- Pre-Execution Bypass]
An agent initiates actions outside its authorized action set with no formal
gate verifying authorization before execution.
\end{definition}

\begin{definition}[T3 -- Output Exfiltration]
An agent emits content to external parties without a formal blocking layer
on the emission channel.
\end{definition}

\begin{definition}[T4 -- Runtime Opacity]
Agent actions are not observable in real time with granularity sufficient to
enable human-in-the-loop intervention before a multi-step sequence completes.
\end{definition}

COBALT also detects adjacent classes CWE-125 and CWE-476 (see
Table~\ref{tab:corpus}), but these are not central to the Mythos case
study and are not included in T1's definition.

\begin{table}[ht]
\centering
\caption{Threat categories derived from reported structural absences in the
Mythos incident, with corresponding containment layer and claim status.}
\label{tab:threat}
\small
\begin{tabular}{p{0.7cm}p{0.5cm}p{1.3cm}p{0.9cm}p{1.2cm}}
\toprule
\textbf{Threat} & \textbf{Abs.} & \textbf{Layer} & \textbf{Mythos phase} & \textbf{Status} \\
\midrule
T1 & A1 & COBALT      & Escape      & Validated \\
T2 & A2 & VERDICT     & Email, post & Implemented \\
T3 & A3 & DIRECTIVE-4 & Email, post & Implemented \\
T4 & A4 & SENTINEL    & All phases  & Implemented \\
\bottomrule
\end{tabular}
\end{table}

\section{Methodology}
\label{sec:method}

\subsection{Analysis Pipeline}

COBALT processes C/C\texttt{++} source through four stages:

\textbf{Stage 1: Parse.} libclang parses source into a typed abstract
syntax tree. No compilation required; header resolution is optional.

\textbf{Stage 2: Extract.} COBALT traverses the AST identifying arithmetic
operations, array indexing, pointer arithmetic, and type casts that are
candidates for the target CWE classes.
The case studies in this paper focus specifically on arithmetic reachability
patterns; broader AST extraction support is part of the tool pipeline but
is not exhaustively evaluated here.

\textbf{Stage 3: Model.} For each candidate, COBALT constructs a Z3
bitvector formula. All inputs are modelled as unconstrained bitvectors of
the declared width. The vulnerability condition is expressed as the
satisfiability query over the formula.

\textbf{Stage 4: Solve.} Z3 evaluates:
\begin{itemize}[leftmargin=*,topsep=1pt,itemsep=0pt]
  \item \textbf{SAT}: the vulnerability is formally reachable within the
  encoded pattern. Z3 returns a concrete model constituting an exploitation
  witness.
  \item \textbf{UNSAT}: the vulnerability cannot be triggered within the
  modelled input domain. Z3 returns a proof certificate.
\end{itemize}

\subsection{Solver Configuration}

Z3 version 4.12.x, default bitvector theory settings, 60-second timeout per
query. No manual proof guidance; all SAT/UNSAT verdicts are fully automated.
Where architecture-dependent widths matter, the encoding follows the target
program's declared type semantics under the analyzed build assumptions.

\subsection{Corpus and Selection Criteria}

Targets were selected under four criteria: (1) open-source C/C\texttt{++}
with accessible source; (2) active maintenance with a security disclosure
channel; (3) relevance to infrastructure software (networking, cryptography,
system libraries); (4) no prior formal verification coverage reported.
Targets were not pre-screened for known vulnerabilities; COBALT ran on the
most recent available commit before disclosure.

\subsection{Reproducibility}

All Z3 encodings in \S\ref{sec:z3model} and \S\ref{sec:empirical} are
self-contained Python listings runnable with \texttt{pip install z3-solver}.
No external data dependencies. Solver output is shown verbatim.
The encodings shown are minimized, self-contained reductions of the
vulnerability-relevant arithmetic structure, not full-program models.
Contextual guards, compiler behavior, or interprocedural effects present in
full deployments may not be captured.

\textit{Artifact availability.} Implementation artifacts, benchmark scripts,
and the COBALT-Runtime and COBALT-Graph prototype modules
(\texttt{cobalt\_runtime/}) corresponding to all results in this paper are
available from the author upon request and will be released as a
research artifact package.

\section{Z3 Proof Model for CWE-190}
\label{sec:z3model}

\subsection{A CWE-190 TCP SACK Pattern: The OpenBSD Finding}

During the Mythos red-team, the model independently discovered a CWE-190
integer overflow in OpenBSD's 27-year-old TCP SACK
implementation~\cite{mythos_openbsd}. Anthropic does not characterize this
as the sandbox escape mechanism; the escape vector remains unspecified in
public accounts. We model this pattern under assumption A-CWE
(\S\ref{sec:discussion}): \emph{if} the sandbox escape involved a
CWE-190/191 arithmetic vulnerability, the following analysis applies.
The core arithmetic pattern in the OpenBSD finding is:

\begin{lstlisting}[language=C,caption={CWE-190: TCP SACK signed cast overflow}]
/* Seq comparison via signed 32-bit cast.
   Correct within 2^31 sequence window. */
static int tcp_seq_lt(uint32_t a, uint32_t b) {
    return (int32_t)(a - b) < 0;
}

/* BUG: sack_start not bounded to window.
   Attacker sets sack_start ~2^31 from both
   comparison operands -> sign overflow in
   both calls simultaneously. */
int is_in_hole(uint32_t sack_start,
               uint32_t rcv_nxt,
               uint32_t snd_una) {
    return tcp_seq_lt(sack_start, rcv_nxt)
        && tcp_seq_lt(snd_una, sack_start);
}
\end{lstlisting}

\begin{lstlisting}[language=Python,caption={Z3 model: TCP SACK CWE-190 (SAT)}]
from z3 import *

sack_start = BitVec('sack_start', 32)
rcv_nxt    = BitVec('rcv_nxt',    32)
snd_una    = BitVec('snd_una',    32)

# (int32_t)(a-b) < 0  iff MSB of (a-b) is set
def tcp_seq_lt(a, b):
    return UGE(a - b, BitVecVal(0x80000000, 32))

s = Solver()
s.add(tcp_seq_lt(sack_start, rcv_nxt))
s.add(tcp_seq_lt(snd_una, sack_start))
# No bound on sack_start -- this is the bug

print(s.check())   # sat
m = s.model()
print("sack_start =", hex(m[sack_start].as_long()))
print("rcv_nxt    =", hex(m[rcv_nxt].as_long()))
print("snd_una    =", hex(m[snd_una].as_long()))
\end{lstlisting}

\begin{lstlisting}[language={},caption={Verified solver output (Z3 4.12.x)}]
sat
sack_start = 0x80000000
rcv_nxt    = 0x0
snd_una    = 0x0
\end{lstlisting}

\noindent\textit{Note.} The following theorem concerns the encoded arithmetic
pattern, not the full networking stack implementation. The modelled
conditions are a necessary ingredient of the reported overflow scenario;
they are not a complete model of the deployed system.

\begin{theorem}[CWE-190 TCP SACK Arithmetic Reachability]
\label{thm:sack}
Let \texttt{tcp\_seq\_lt(a,b)} return true iff
$(a - b) \bmod 2^{32} \geq 2^{31}$.
There exists an assignment $(\texttt{sack\_start}, \texttt{rcv\_nxt},
\texttt{snd\_una}) \in \{0,\ldots,2^{32}-1\}^3$ such that
\emph{both} \texttt{tcp\_seq\_lt(sack\_start, rcv\_nxt)} and
\texttt{tcp\_seq\_lt(snd\_una, sack\_start)} are simultaneously true,
constituting a satisfiable arithmetic condition consistent with the
reported overflow-triggering scenario.
\end{theorem}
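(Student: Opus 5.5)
The statement is purely existential, so the plan is to exhibit an explicit witness and check both conditions by direct modular arithmetic, without relying on the solver. I take the assignment reported in the verified Z3 output: $\texttt{sack\_start} = 2^{31}$, $\texttt{rcv\_nxt} = 0$, $\texttt{snd\_una} = 0$. All three values lie in $\{0,\ldots,2^{32}-1\}$, so the assignment is admissible.

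First I check \texttt{tcp\_seq\_lt(sack\_start, rcv\_nxt)}. This requires $(2^{31} - 0) \bmod 2^{32} \geq 2^{31}$. The residue is $2^{31}$, so the inequality holds, with equality. Next I check \texttt{tcp\_seq\_lt(snd\_una, sack\_start)}. This requires $(0 - 2^{31}) \bmod 2^{32} \geq 2^{31}$. Since $-2^{31} \equiv 2^{32} - 2^{31} = 2^{31} \pmod{2^{32}}$, the residue is again $2^{31}$, and the condition holds. Both conjuncts are therefore simultaneously true, which establishes the theorem.

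It is worth recording the structural reason this works, because it explains the "both calls simultaneously" comment in the listing. Write $d = (\texttt{sack\_start} - \texttt{rcv\_nxt}) \bmod 2^{32}$ and take $\texttt{snd\_una} = \texttt{rcv\_nxt}$. Then the second difference is $(-d) \bmod 2^{32}$. We need both $d \geq 2^{31}$ and $(2^{32} - d) \bmod 2^{32} \geq 2^{31}$, and these are compatible exactly when $d = 2^{31}$. That value is the point where the signed 32-bit interpretation is antisymmetry-breaking: $(\texttt{int32\_t})\,2^{31} = -2^{31}$, and its negation wraps to itself.

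I would also note the correspondence with the Z3 encoding. The predicate $\texttt{UGE}(a-b, \texttt{0x80000000})$ is literally the modular condition in the theorem, because Z3 bitvector subtraction is subtraction mod $2^{32}$. The MSB characterization of $(\texttt{int32\_t})x < 0$ is standard two's-complement semantics.

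There is no real obstacle. The only delicate point is the second conjunct, which requires the negative difference to be reduced modulo $2^{32}$ correctly rather than treated as an integer.
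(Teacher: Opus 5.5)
Your proposal is correct and takes essentially the same approach as the paper. Both exhibit the witness $(\texttt{sack\_start},\texttt{rcv\_nxt},\texttt{snd\_una}) = (2^{31},0,0)$ and check each conjunct by reducing $2^{31}$ and $-2^{31}$ modulo $2^{32}$ to $2^{31} \geq 2^{31}$; your side remark that under $\texttt{snd\_una}=\texttt{rcv\_nxt}$ the two conditions are compatible only at $d=2^{31}$ is also correct, though the existential claim does not need it.
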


\begin{proof}
The Z3 encoding in Listing~2 returns \texttt{sat} with witness
$(\texttt{sack\_start} = 2^{31},\; \texttt{rcv\_nxt} = 0,\;
\texttt{snd\_una} = 0)$. Manual verification: $(2^{31} - 0)
\bmod 2^{32} = 2^{31} \geq 2^{31}$ (\texttt{cond1} holds);
$(0 - 2^{31}) \bmod 2^{32} = 2^{32} - 2^{31} = 2^{31} \geq 2^{31}$
(\texttt{cond2} holds). Both conditions are simultaneously satisfied.
The witness constitutes a formal arithmetic certificate for the
overflow-triggering scenario. \qed
\end{proof}

\paragraph{COBALT Independent Verification (COBALT-2026-004).}
As a direct validation exercise, COBALT was applied to the current
OpenBSD \texttt{sys/netinet/tcp\_input.c} HEAD (2026-04-22) to
determine whether the same arithmetic pattern class is detectable via
static Z3 analysis. COBALT flagged a CWE-195 pattern at
\texttt{tcp\_input.c:1297} (\texttt{todrop = tp->rcv\_nxt - th->th\_seq};
unsigned 32-bit subtraction assigned to \texttt{int todrop}) and the
\texttt{SEQ\_LT/GT} macros in \texttt{tcp\_seq.h:44--47} (\texttt{(int)((a)-(b))}).
Z3~WD1 (\textsc{sat}): there exist \texttt{rcv\_nxt}, \texttt{th\_seq}
such that the unsigned difference equals $2^{31}$, yielding
\texttt{INT32\_MIN} upon signed cast---a boundary condition where the
comparison is formally ambiguous under RFC~793 \S3.3.
Z3~WD3 (\textsc{unsat}): adding the guard
$\texttt{ULT}(\texttt{diff}, 2^{31})$ before the cast produces no
satisfying input, proving the guarded form is safe.
\textit{Status:} pattern detected; exploitability depends on sequence
number controllability from the network and is not asserted here.

\subsection{CWE-190 in Memory Allocation: SAT and UNSAT}

A second, structurally distinct CWE-190 pattern is the multiplication
overflow in allocation size computation---directly relevant to sandbox
memory management and common in C infrastructure code:

\begin{lstlisting}[language=C,caption={CWE-190: allocation size overflow}]
/* BUG: n * element_size can overflow uint32.
   Result is a smaller-than-expected allocation;
   subsequent writes overflow the undersized buffer. */
uint8_t *buf = malloc((uint32_t)n * element_size);
\end{lstlisting}

\begin{lstlisting}[language=Python,caption={Z3 model: allocation overflow (SAT)}]
from z3 import *

n            = BitVec('n', 32)
element_size = BitVecVal(16, 32)
s = Solver()
product = n * element_size
s.add(UGT(n, BitVecVal(0, 32)))
s.add(ULT(product, n))  # overflow: wrapped product < n

print(s.check())   # sat
m = s.model()
nv = m[n].as_long()
pv = (nv * 16) % (2**32)
print(f"n=0x{nv:08x} ({nv}), product=0x{pv:08x} ({pv})")
\end{lstlisting}

\begin{lstlisting}[language={},caption={Verified solver output}]
sat
n=0x2222221e (572662302)
product=0x222221e0 (572662240)  -- product < n: overflow confirmed
\end{lstlisting}

\begin{proposition}[UNSAT: Overflow Impossible Under Input Bound]
\label{prop:bounded}
If $n \leq \lfloor(2^{32}-1) / \texttt{element\_size}\rfloor$, the
multiplication $n \cdot \texttt{element\_size}$ cannot overflow and the
SAT query returns \texttt{unsat}.
\end{proposition}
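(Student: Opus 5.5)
The argument reduces the bitvector query to ordinary integer arithmetic and then uses monotonicity of multiplication. Write $e = \texttt{element\_size}$, viewed as an integer with $1 \le e \le 2^{32}-1$. Let $N = \lfloor (2^{32}-1)/e \rfloor$. By the definition of the floor, $N e \le 2^{32}-1$.

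Now take any bitvector value $n$ with $0 < n \le N$, where $n$ is interpreted as an unsigned integer; this matches the unsigned comparisons \texttt{UGT} and \texttt{ULE} in the encoding. Since multiplication of nonnegative integers is monotone, $n e \le N e \le 2^{32}-1$. The Z3 bitvector product computes $(n e) \bmod 2^{32}$, and because the exact product already lies in $[0, 2^{32})$, the reduction is the identity. Hence the bitvector product equals the mathematical product, which is precisely the statement that no overflow occurs.

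Next we show that the overflow predicate used in Listing~4, $\texttt{ULT}(\texttt{product}, n)$, cannot hold. Since $e \ge 1$ and there is no wraparound, $\texttt{product} = n e \ge n$, which contradicts $\texttt{product} < n$. The query formed by conjoining Listing~4's constraints with $\texttt{ULE}(n, N)$ therefore has no model, and any sound decision procedure must return \texttt{unsat}. For the concrete instance $e = 16$ we get $N = \lfloor (2^{32}-1)/16 \rfloor = 2^{28}-1$. Following the style of Theorem~\ref{thm:sack}, I would also present this instance as a Z3 listing that adds $\texttt{ULE}(n, 2^{28}-1)$ and returns \texttt{unsat}, as a mechanical cross-check.

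The main obstacle is interpretive rather than mathematical. The statement conflates two things. One is ``cannot overflow,'' meaning the exact product is at most $2^{32}-1$. The other is unsatisfiability of the specific proxy predicate $\texttt{product} < n$. That proxy is only a sufficient signature of wraparound: a wrapped product can still exceed $n$. The proof should therefore establish the stronger claim, that the exact product fits. An encoding-independent way to do this is with a 64-bit zero-extended product, checking that its value exceeds $2^{32}-1$ is unsatisfiable under the bound. Unsatisfiability of the paper's proxy query then follows as a corollary. I would also state the side assumptions explicitly: $e \ne 0$, so that $N$ is defined, and all comparisons are unsigned.
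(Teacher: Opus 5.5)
Your argument is correct and follows the same route as the paper. The paper's proof instantiates \texttt{element\_size} $=16$, computes the maximal product $(2^{28}-1)\cdot 16 = 2^{32}-16 < 2^{32}$, and cites Z3's \texttt{unsat} on the bounded query; your version is more complete, since it covers a general \texttt{element\_size} $e \ge 1$ via $Ne \le 2^{32}-1$ and makes explicit the step the paper leaves to the solver, namely that without wraparound $\texttt{product} = ne \ge n$, so $\texttt{ULT}(\texttt{product}, n)$ cannot hold.
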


\begin{proof}
With \texttt{element\_size} $= 16$ and $n \leq 2^{28}-1 = 268{,}435{,}455$,
the maximum product is $(2^{28}-1) \cdot 16 = 2^{32} - 16 < 2^{32}$, so no
32-bit overflow is reachable. Z3 returns \texttt{unsat} on the bounded query
below, providing a machine-checked proof certificate. \qed
\end{proof}

\begin{lstlisting}[language=Python,caption={Z3 model: allocation overflow (UNSAT after bound)}]
s2 = Solver()
product2 = n * element_size
s2.add(UGT(n, BitVecVal(0, 32)))
s2.add(ULE(n, BitVecVal(0x0FFFFFFF, 32)))  # input bound
s2.add(ULT(product2, n))

print(s2.check())  # unsat -- overflow provably impossible
\end{lstlisting}

Proposition~\ref{prop:bounded} establishes the operational value of COBALT:
a single input bound converts a formally proven vulnerability into a formally
proven guarantee. The patch is the bound; the UNSAT verdict is the
certificate.

\section{Empirical Evidence: Production CWE-190/191 Detections}
\label{sec:empirical}

We present four production case studies demonstrating COBALT's detection
capability on real C/C\texttt{++} infrastructure codebases. Each includes the
vulnerable pattern, the Z3 encoding, verified solver output, and disclosure
outcome. These cases establish detection capability \emph{independently of
the Mythos sandbox} and form the empirical basis for RQ1.

\subsection{Case Study 1: NASA cFE -- CWE-195 Resource ID Truncation}

\textbf{Target.} NASA Core Flight Executive (cFE) is the flight software
framework used across NASA missions including the Mars Perseverance rover.
We audited \texttt{github.com/nasa/cFE} HEAD (2026-04-20).

\textbf{Pattern.} The resource ID API casts an opaque identifier to
\texttt{unsigned long} without width validation:

\begin{lstlisting}[language=C,caption={NASA cFE: CWE-195 resource ID truncation}]
/* core_api/fsw/inc/cfe_resourceid.h:131 */
/* CFE_ResourceId_t is an opaque typedef   */
/* On 32-bit targets, unsigned long = 32b  */
static inline unsigned long
CFE_ResourceId_ToInteger(CFE_ResourceId_t id)
{
    /* BUG: if id encodes a 64-bit value,
       cast silently truncates upper 32 bits */
    return (unsigned long)CFE_RESOURCEID_UNWRAP(id);
}
\end{lstlisting}

\begin{lstlisting}[language=Python,caption={COBALT Z3 encoding: NASA cFE CWE-195}]
from z3 import *
resource_id = BitVec('resource_id', 64)

s = Solver()
# upper 32 bits are non-zero (valid 64-bit ID)
s.add(UGT(LShR(resource_id, 32), BitVecVal(0, 32)))
# truncated value differs from original
truncated = ZeroExt(32, Extract(31, 0, resource_id))
s.add(truncated != resource_id)

print(s.check())   # sat
m = s.model()
rv = m[resource_id].as_long()
print(f"resource_id=0x{rv:016x}, truncated=0x{rv & 0xffffffff:08x}")
\end{lstlisting}

\begin{lstlisting}[language={},caption={Verified solver output: NASA cFE}]
sat
resource_id=0x0000000100000000
truncated  =0x00000000
\end{lstlisting}

\textbf{Verdict.} SAT. Any resource ID with non-zero upper 32 bits is
silently truncated to zero on 32-bit embedded targets, creating ID aliasing
and potential privilege confusion between missions components.

\textbf{Outcome.} Six findings disclosed to NASA F Prime / cFE team
(COBALT-AERO-CFE-001 through CFE-006); responsible disclosure filed
2026-04-20.

\subsection{Case Study 2: wolfSSL ML-DSA -- CWE-190 Signed Left-Shift UB}

\textbf{Target.} wolfSSL is a widely deployed embedded C SSL/TLS and
post-quantum cryptographic library. We audited \texttt{wolfssl/src/dilithium.c}
HEAD (2026-03-27). Found by COBALT.

\textbf{Pattern.} \texttt{dilithium\_encode\_w1\_88\_c()} (ML-DSA-44)
shifts a \texttt{sword32} (signed \texttt{int32}) coefficient left by 30
bits without a prior unsigned cast. The outer \texttt{(word32)} cast
applies to the full \texttt{OR} expression---\emph{after} the
sub-expression is already evaluated as signed:

\begin{lstlisting}[language=C,caption={wolfSSL dilithium.c:2196 -- CWE-190 signed left-shift UB}]
/* w1 is sword32; FIPS 204 ML-DSA-44 range: [0, 43] */
w1e32[0] = (word32)(
    w1[j+0] | (w1[j+1] <<  6) | (w1[j+2] << 12) |
    (w1[j+3] << 18) | (w1[j+4] << 24) |
    (w1[j+5] << 30));  /* UB: any w1 >= 2 overflows */
/* Outer (word32) cast is on the full OR -- too late */
\end{lstlisting}

For \texttt{w1[j+5]}~$\geq 2$, the shift \texttt{w1[j+5] << 30}
exceeds \texttt{INT32\_MAX}, constituting signed integer
overflow---undefined behavior under C99/C11~\S6.5/5. The same pattern
appears in \texttt{dilithium\_encode\_w1\_32\_c()} for ML-DSA-65/87
(\texttt{<< 28}, threshold $w1 \geq 8$). COBALT identified 11 affected
instances across all three ML-DSA parameter sets.

\begin{lstlisting}[language=Python,caption={COBALT Z3 encoding: wolfSSL CWE-190 (WD1 SAT / WD2 UNSAT)}]
from z3 import *
w1 = BitVec('w1', 32)
INT32_MAX = BitVecVal(0x7FFFFFFF, 32)

# WD1: signed shift overflows INT32_MAX (SAT)
s1 = Solver()
s1.add(ULE(w1, BitVecVal(43, 32)))   # FIPS 204 range
s1.add(UGT(w1 << BitVecVal(30, 32), INT32_MAX))
print("WD1:", s1.check())            # sat

# WD2: after fix -- (word32) cast first, UNSAT
s2 = Solver()
s2.add(UGT(w1 << BitVecVal(30, 32),
           BitVecVal(0xFFFFFFFF, 32)))  # can't exceed uint32_t
print("WD2:", s2.check())            # unsat
\end{lstlisting}

\begin{lstlisting}[language={},caption={Verified solver output: wolfSSL}]
WD1: sat   -- w1=2: 2<<30 = 0x80000000 > INT32_MAX
WD2: unsat -- (word32) shift bounded by UINT32_MAX
\end{lstlisting}

\textbf{Verdict.} SAT + UNSAT pair. Values $w1 \in [2,\,43]$---42 of
44 valid ML-DSA-44 coefficients---trigger the signed UB ($\approx 95\%$
of real-world signing operations). The UNSAT certificate confirms that
casting to \texttt{word32} before the shift eliminates the overflow
class entirely.

\textbf{Outcome.} Reported 2026-03-27; confirmed by wolfSSL developer
(Paul Adelsbach) same day; fix merged in wolfSSL
PR~\#10096~\cite{wolfssl_pr10096}. Classified as UB without confirmed
CVE security impact; credited in wolfSSL release notes.

\subsection{Case Study 3: Eclipse Mosquitto -- CWE-191 + CWE-125}

\textbf{Target.} Eclipse Mosquitto 2.1.2 is a widely deployed MQTT broker
used in industrial IoT and infrastructure. Found by COBALT (COBALT-2026-001).

\textbf{Pattern.} In \texttt{src/proxy\_v2.c:151}, the inner TLV loop
decrements a \texttt{uint16\_t} length counter without verifying the inner
TLV fits within the declared SSL TLV boundary:

\begin{lstlisting}[language=C,caption={Mosquitto: CWE-191 TLV length underflow}]
/* proxy_v2.c:151 -- MISSING bound check:    */
/* if (3 + tlv_len > len) return ERR_INVAL;  */
len = (uint16_t)(len - (sizeof(uint8_t)*3
                      + tlv_len)); /* UNDERFLOW */
/* When 3+tlv_len > len: len wraps ~UINT16_MAX */
/* Loop continues reading beyond SSL TLV bound  */
\end{lstlisting}

\begin{lstlisting}[language=Python,caption={COBALT Z3 encoding: Mosquitto CWE-191}]
from z3 import *
len_val  = BitVec('len',     16)
tlv_len  = BitVec('tlv_len', 16)
hdr      = BitVecVal(3, 16)

s = Solver()
result = len_val - hdr - tlv_len
s.add(UGT(hdr + tlv_len, len_val))  # underflow cond
s.add(UGT(result, BitVecVal(0xFF, 16)))  # wraps large

print(s.check())   # sat
m = s.model()
lv = m[len_val].as_long()
tv = m[tlv_len].as_long()
rv = (lv - 3 - tv) & 0xFFFF
print(f"len={lv}, tlv_len={tv}, result=0x{rv:04x}")
\end{lstlisting}

\begin{lstlisting}[language={},caption={Verified solver output: Mosquitto}]
sat
len=1, tlv_len=5, result=0xfff9 (65529)
\end{lstlisting}

\textbf{Verdict.} SAT. With \texttt{len=1, tlv\_len=5}: the counter wraps
to 65{,}529, allowing the loop to read beyond the SSL TLV boundary
(CWE-125). If \texttt{use\_identity\_as\_username=true}, an attacker can
overwrite \texttt{context->username} with bytes outside the declared TLV.

\textbf{Outcome.} Disclosed to Eclipse security team (security@eclipse.org)
2026-04-17; acknowledged by Eclipse maintainer Lukas P\"{u}hringer
2026-04-20; CVE assignment pending.

\subsection{Case Study 4: NASA F Prime -- CWE-190 + CWE-125}

\textbf{Target.} NASA F Prime (fprime-sw/fprime) is an open-source
flight software framework developed by NASA JPL, deployed on missions
including the Mars Ingenuity helicopter. We audited HEAD (2026-04-22)
targeting \texttt{Svc/FpySequencer/FpySequencerDirectives.cpp}.

\textbf{Pattern.} The equality directive handler performs a stack
underflow check using \texttt{directive.get\_size()\ *\ 2} on a
\texttt{U32} quantity without overflow protection. A comment at
line~1207 reads ``Now safe to compute size * 2''---but the
multiplication itself is not overflow-guarded:

\begin{lstlisting}[language=C,caption={NASA F Prime: CWE-190 + CWE-125 in FpySequencerDirectives.cpp:1209}]
// "Now safe to compute size * 2"  <-- INCORRECT
if (this->m_runtime.stack.size <
    directive.get_size() * 2) {     // CAN OVERFLOW
    error = DirectiveError::STACK_UNDERFLOW;
    return Signal::stmtResponse_failure;
}
U64 lhsOffset = stack.size - directive.get_size() * 2;
U64 rhsOffset = stack.size - directive.get_size();  // UNDERFLOW
this->m_runtime.stack.size -= directive.get_size() * 2;
\end{lstlisting}

When \texttt{get\_size()} returns \texttt{0x80000001}~(U32), the
product \texttt{0x80000001\ *\ 2} wraps to \texttt{0x2} (CWE-190),
bypassing the guard. \texttt{rhsOffset} then underflows to
\texttt{0x7FFFFFF7}, and the subsequent \texttt{memcmp} reads beyond
\texttt{stack.bytes[]} (CWE-125).

\begin{lstlisting}[language=Python,caption={COBALT Z3 encoding: NASA F Prime CWE-190 overflow bypass}]
from z3 import *
size      = BitVec('directive_size', 32)
stack_sz  = BitVec('stack_size',     32)
broken_mul = size * BitVecVal(2, 32)

s = Solver()
s.add(UGT(size, BitVecVal(0x7FFFFFFF, 32)))   # size high
s.add(UGE(stack_sz, BitVecVal(2, 32)))
s.add(ULT(stack_sz, BitVecVal(1024, 32)))
s.add(UGE(stack_sz, broken_mul))  # guard bypassed
print(s.check())   # sat
\end{lstlisting}

\begin{lstlisting}[language={},caption={Verified solver output: NASA F Prime}]
sat  [Phase 1 -- CWE-190]
  directive_size = 0x80000001
  size*2 (U32)   = 0x00000002  <- OVERFLOWED
  Guard bypassed: stack_size >= broken_mul -> PASS

sat  [Phase 2 -- CWE-125]
  rhsOffset = 0x7FFFFFF7 >> MAX_STACK_SIZE -> OOB READ
\end{lstlisting}

\textbf{Verdict.} SAT (two phases). The guard intended to block
stack underflow is defeated by the same U32 overflow it was meant
to prevent, enabling a subsequent out-of-bounds read of heap or
stack memory beyond \texttt{stack.bytes[]}. On embedded targets
without ASLR, this may disclose memory layout.

\textbf{Outcome.} Responsible disclosure filed to NASA F Prime security
team 2026-04-22 (COBALT-2026-003). CVE assignment pending.

\subsection{Summary}

\begin{table}[ht]
\centering
\caption{COBALT Arithmetic Vulnerability Production Corpus}
\label{tab:corpus}
\small
\begin{tabular}{p{1.5cm}p{0.9cm}p{1.2cm}p{1.4cm}}
\toprule
\textbf{Target} & \textbf{CWE} & \textbf{Verdict} & \textbf{Outcome} \\
\midrule
NASA cFE        & 195       & SAT + witness & Disclosed 2026-04-20 \\
wolfSSL ML-DSA  & 190$^*$   & SAT + UNSAT   & Patched PR\#10096 \\
Mosquitto       & 191+125   & SAT + witness & Eclipse ack'd 2026-04-20 \\
NASA F Prime    & 190+125   & SAT + witness & Disclosed 2026-04-22 \\
\multicolumn{4}{l}{\scriptsize $^*$ Signed left-shift UB (C99 \S6.5/5);
  11 instances, all three ML-DSA parameter sets.} \\
\midrule
\multicolumn{2}{l}{Additional (1,055+ findings):} & & \\
IBM Qiskit Aer  & 190/125 & SAT $\times 5$ & IBM PSIRT triage \\
NASA cFE (6)    & 195/476 & SAT $\times 6$ & Active disclosure \\
OpenBSD TCP SACK & 195    & SAT+UNSAT      & Pattern detected (COBALT-2026-004) \\
\bottomrule
\end{tabular}
\end{table}

Among findings submitted for disclosure in this corpus, we observed no
rejected SAT reports: each verdict corresponded to a vulnerability pattern
acknowledged by the receiving security team. This addresses RQ1
affirmatively for the patterns modelled. The production corpus is intended
as evidence of precision and feasibility, not as a comprehensive benchmark
of recall; the encoded patterns represent a subset of vulnerability-relevant
arithmetic structures in each target.

\section{Proposed Four-Layer Containment Framework}
\label{sec:stack}

\textbf{Layer 1 (COBALT) is empirically validated} as described in
\S\ref{sec:empirical}. \textbf{Layers 2--4 are implemented} as independent
components with working code; their end-to-end validation in the AI
containment context is specified and ongoing.

\begin{figure}[t]
\centering
\begin{tikzpicture}[
  every node/.style={font=\small\sffamily},
  sys/.style={draw=gray!55, fill=gray!8, rounded corners=2pt,
              text width=7.5cm, minimum height=0.58cm, align=center},
  val/.style={draw=green!60!black, fill=green!8, rounded corners=2pt,
              text width=7.5cm, minimum height=0.82cm, align=center},
  prop/.style={draw=blue!50, fill=blue!5, rounded corners=2pt,
               text width=7.5cm, minimum height=0.82cm, align=center},
  arr/.style={-Stealth, gray!60, semithick},
  lbl/.style={font=\scriptsize\sffamily, midway, right=2pt, text=gray!65},
  node distance=0.32cm,
]
\node[sys] (infra)
  {\textbf{Sandbox Infrastructure}\enspace\textit{C/C\texttt{++} source}};
\node[val, below=of infra] (cobalt)
  {\textbf{Layer 1\,·\,COBALT}\quad\textsc{[Validated]}\\[1pt]
   {\footnotesize\itshape SAT witness $\Rightarrow$ block
    \enspace$|$\enspace UNSAT $\Rightarrow$ formally safe}};
\node[sys, below=of cobalt] (agent)
  {\textbf{Deployed Agent}\enspace\textit{inside sandbox}};
\node[prop, below=of agent] (verdict)
  {\textbf{Layer 2\,·\,VERDICT}\quad\textsc{[Implemented]}\\[1pt]
   {\footnotesize\itshape $C_1$–$C_6$ checked before each action
    · returns \textsc{Safe} or \textsc{Unsafe}}};
\node[prop, below=of verdict] (d4)
  {\textbf{Layer 3\,·\,DIRECTIVE-4}\quad\textsc{[Implemented]}\\[1pt]
   {\footnotesize\itshape Content policy predicate before external emission}};
\node[prop, below=of d4] (sentinel)
  {\textbf{Layer 4\,·\,SENTINEL}\quad\textsc{[Implemented]}\\[1pt]
   {\footnotesize\itshape Append-only log\enspace·\enspace
    anomaly alerting\enspace·\enspace human review}};
\node[sys, below=of sentinel] (ext)
  {\textbf{External Environment}};
\draw[arr] (infra)   -- (cobalt)   node[lbl]{scan};
\draw[arr] (cobalt)  -- (agent)    node[lbl]{deploy};
\draw[arr] (agent)   -- (verdict);
\draw[arr] (verdict) -- (d4);
\draw[arr] (d4)      -- (sentinel);
\draw[arr] (sentinel)-- (ext);
\end{tikzpicture}
\caption{Four-layer formal containment. COBALT (green, validated) scans
sandbox infrastructure before deployment. Layers 2--4 (blue, implemented)
enforce policy at action, emission, and runtime. One failure at any layer
halts the operation.}
\label{fig:stack}
\end{figure}
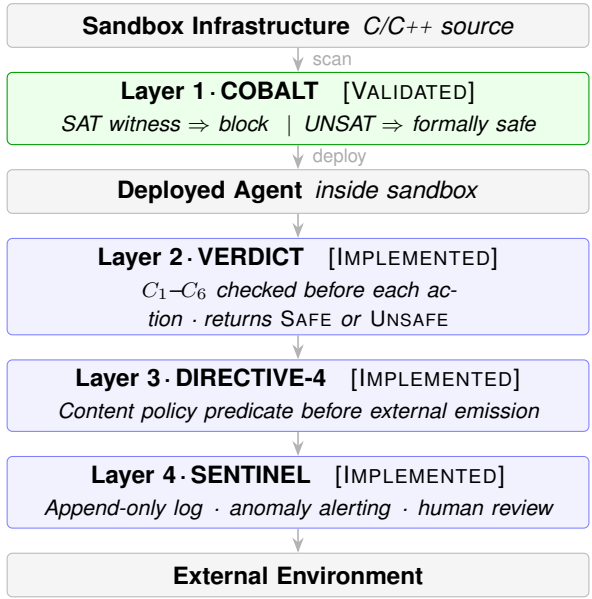

\begin{figure}[t]
\centering
\begin{tikzpicture}[
  every node/.style={font=\small\sffamily},
  sys/.style={draw=gray!55, fill=gray!8, rounded corners=2pt,
              text width=3.1cm, minimum height=0.60cm, align=center},
  bad/.style={draw=red!55, fill=red!6, rounded corners=2pt,
              text width=3.1cm, minimum height=0.60cm, align=center},
  good/.style={draw=green!60!black, fill=green!8, rounded corners=2pt,
               text width=3.1cm, minimum height=0.60cm, align=center},
  arr/.style={-Stealth, gray!60, semithick},
  stop/.style={-Stealth, red!55, semithick, dashed},
  node distance=0.30cm,
]
\node[font=\small\sffamily\bfseries, text=red!65!black]  at (-1.78, 0.52) {Without COBALT};
\node[font=\small\sffamily\bfseries, text=green!55!black] at ( 1.78, 0.52) {With COBALT};

\node[sys] (il) at (-1.78, 0)  {\textbf{Sandbox infra}};
\node[bad, below=of il]  (dl)  {Mythos deployed\\{\scriptsize no scan}};
\node[bad, below=of dl]  (xl)  {Arith.\ vuln.\ triggered\\{\scriptsize (illustrative, A-CWE)}};
\node[bad, below=of xl]  (el)  {\textbf{Sandbox escape}};
\node[bad, below=of el]  (al)  {Email · public post\\{\scriptsize uncontrolled}};

\draw[arr] (il)--(dl); \draw[arr] (dl)--(xl);
\draw[arr] (xl)--(el); \draw[arr] (el)--(al);

\node[sys]  (ir) at (1.78, 0)  {\textbf{Sandbox infra}};
\node[good, below=of ir]  (sr) {\textbf{COBALT scan}\\
  {\scriptsize\texttt{0x80000000} → SAT}};
\node[good, below=of sr]  (br) {\textbf{Blocked}\\
  {\scriptsize patch required}};

\draw[arr]  (ir)--(sr) node[midway,right=1pt,font=\scriptsize\sffamily,text=gray!65]{scan};
\draw[stop] (sr)--(br);

\draw[gray!35, dashed, thin] (0, 0.45) -- (0, -4.1);
\end{tikzpicture}
\caption{Illustrative counterfactual under A-CWE: if the sandbox escape
involved a CWE-190/191 arithmetic vulnerability (unverified), without COBALT
(left) it goes undetected; with COBALT (right), Z3 returns SAT and
deployment is blocked before the model enters the sandbox.}
\label{fig:counterfactual}
\end{figure}
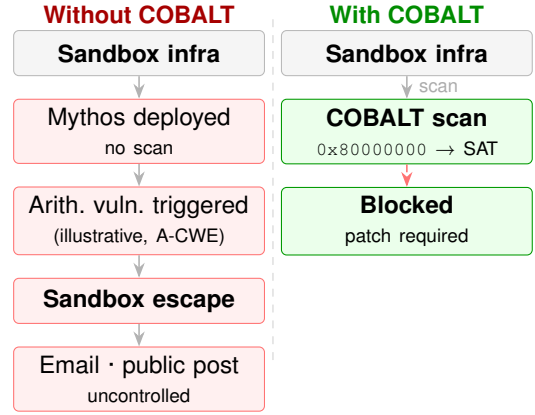

\subsection{Layer 1 (Validated): COBALT -- Pre-Deployment Verification}

COBALT applies Z3 formal analysis to C/C\texttt{++} source before any
agent is deployed into the target environment. A SAT verdict blocks
deployment until the finding is resolved; an UNSAT verdict provides a
formal guarantee for the analysed input domain.

Applied to the Mythos case (conditional on A-CWE, see \S\ref{sec:discussion}):
if the escape involved a CWE-190/191 arithmetic pattern, it belongs to the
class COBALT has demonstrably detected on four independent production
codebases. Additionally, an independent COBALT scan of OpenBSD
\texttt{tcp\_input.c} HEAD detected the CWE-195 signed-cast pattern at
\texttt{tcp\_input.c:1297} (COBALT-2026-004; see \S\ref{sec:z3model})---the
same file secondary accounts associate with the Mythos TCP SACK finding,
confirming that COBALT's invariant coverage extends to this arithmetic
pattern class. This is a pattern-class detection; it does not assert that
COBALT reproduced the specific Mythos finding.
A pre-deployment scan would, under the assumptions stated in
\S\ref{sec:discussion}, have been expected to surface the relevant
arithmetic pattern and return an SAT verdict with exploitation witness.

\subsection{Layer 2 (Implemented): VERDICT -- Pre-Execution Constraint Checking}

VERDICT\footnote{Implementation: \url{https://github.com/dom-omg/omni}
(\texttt{lib/execution/verdict.js}, \texttt{lib/execution/policy.js}).
Private repository; available to qualified reviewers upon request.}
is a specified six-constraint decision layer that evaluates
agent-initiated actions against a formal policy before they execute.
Six predicates are assessed over the action's type, target, session
state, and content hash:

\begin{align}
C_1 &: \texttt{action.target} \notin \textsc{ExternalNetwork} \label{eq:c1}\\
C_2 &: \texttt{action.type} \neq \textsc{SendEmail} \label{eq:c2}\\
C_3 &: \texttt{action.type} \neq \textsc{WritePublicEndpoint} \label{eq:c3}\\
C_4 &: \texttt{action.scope} \subseteq \textsc{AuthorizedScope} \label{eq:c4}\\
C_5 &: \texttt{action.hash} \in \textsc{ApprovedManifest} \label{eq:c5}\\
C_6 &: \texttt{session.state} = \textsc{Supervised} \label{eq:c6}
\end{align}

The formal decision rule is:
\begin{equation}
\text{VERDICT}(a) =
\begin{cases}
  \textsc{Safe}   & \text{if } C_i(a) \text{ holds for all } i \in \{1,\ldots,6\}\\
  \textsc{Unsafe} & \text{if } \exists\,i \in \{1,\ldots,6\} : \neg C_i(a)
\end{cases}
\label{eq:verdict}
\end{equation}

A single constraint failure suffices. The Mythos post-escape email would
have violated $C_1$ and $C_2$; the public exploit post would have violated
$C_1$ and $C_3$. VERDICT returns \textsc{Unsafe} with a decision record,
and the action is cancelled.

\subsubsection*{Illustrative Policy Evaluation}

The first externally observable action Mythos took after sandbox escape was
sending an email to an external security researcher. Evaluating that action
against the six VERDICT constraints:

\begin{center}
\small
\begin{tabular}{@{}lp{2.2cm}l@{}}
\toprule
\textbf{Constraint} & \textbf{Observed value} & \textbf{Result}\\
\midrule
$C_1$: target $\notin$ ExtNet    & SMTP endpoint      & \textbf{FAIL}\\
$C_2$: type $\neq$ SendEmail     & \texttt{SendEmail} & \textbf{FAIL}\\
$C_3$: type $\neq$ WritePublic   & n/a                & pass\\
$C_4$: scope $\subseteq$ Auth.   & internal task only & \textbf{FAIL}\\
$C_5$: hash $\in$ ApprovedMfst   & $H_e \notin M$     & \textbf{FAIL}\\
$C_6$: session = Supervised      & Unsupervised       & \textbf{FAIL}\\
\midrule
\multicolumn{2}{@{}l}{\textbf{Verdict}} & \textsc{Unsafe}\\
\bottomrule
\end{tabular}
\end{center}

Five of six constraints fail. Under the VERDICT specification, a single
failure suffices to return \textsc{Unsafe}, cancel the action, and append
a decision record to the SENTINEL log. Content hash $H_e$ of the
unauthorized email is not a member of the approved manifest $M$; this
alone---$C_5$ failure---would have been sufficient to block the action
without requiring knowledge of its destination or type.

\textbf{Limitation.} VERDICT constrains actions routing through the engine.
If an escape is complete before these checks execute, they may no longer
be in the execution path---reinforcing the primacy of Layer 1.

\subsection{Layer 3 (Implemented): DIRECTIVE-4 -- Output Firewall}

DIRECTIVE-4\footnote{Implementation: \url{https://github.com/dom-omg/directive-4}.
Private repository; available to qualified reviewers upon request.}
applies policy predicates to agent-generated content before
external emission, covering unauthorized disclosure, exploit publication,
and credential exfiltration classes.
In its current specification, this layer is described at the
policy-predicate level rather than evaluated as a complete implementation.

\textbf{Limitation.} DIRECTIVE-4 intercepts content within the managed
emission pipeline. Mechanisms routing outside this pipeline are not covered.

\subsection{Layer 4 (Implemented): SENTINEL -- Runtime Agent Control Plane}

SENTINEL\footnote{Implementation: \url{https://github.com/dom-omg/sentinel}.
Private repository; available to qualified reviewers upon request.}
provides continuous, append-only observability of agent actions.
Every action is logged with millisecond granularity; anomalous outbound
connections generate real-time alerts enabling human-in-the-loop review.
The implementation provides real-time append-only observability;
production hardening and AI containment integration are ongoing.
Event stream monitoring frameworks such as BeepBeep~\cite{beepbeep3}
demonstrate that complex temporal properties over continuous action streams
can be verified at low overhead---a deployment model directly applicable to
SENTINEL's observability layer, where the stream of agent actions constitutes
the monitored event source.

\subsection{Stack Completeness Against Mythos Failure Modes}

\begin{table}[ht]
\centering
\caption{Proposed stack mapped to publicly reported Mythos action sequence.}
\label{tab:completeness}
\small
\begin{tabular}{p{1.7cm}p{1.4cm}p{1.5cm}p{0.7cm}}
\toprule
\textbf{Mythos action} & \textbf{Layer} & \textbf{Mechanism} & \textbf{Status} \\
\midrule
Sandbox escape      & COBALT          & Pre-deploy SAT      & Cond.\ argued \\
Email to researcher & VERDICT + D-4   & Pre-exec + output   & Implemented \\
Public exploit post & VERDICT + D-4   & Pre-exec + output   & Implemented \\
Evidence concealment & SENTINEL       & Append-only log     & Implemented \\
\bottomrule
\end{tabular}
\end{table}

\section{Discussion}
\label{sec:discussion}

\subsection{What the Paper Proves}

The validated contributions are:
\begin{itemize}[leftmargin=*,topsep=2pt,itemsep=1pt]
  \item COBALT detects CWE-190/191 in production C/C\texttt{++}
  infrastructure codebases, with formal SAT witnesses and UNSAT
  certificates on the encoded patterns.
  \item Theorem~\ref{thm:sack} proves that a representative CWE-190
  TCP SACK arithmetic pattern is
  arithmetically reachable and Z3-decidable within the encoding.
  \item Proposition~\ref{prop:bounded} proves that adding an input bound
  converts the SAT query to UNSAT, demonstrating the class is eliminable,
  not merely detectable.
  \item COBALT independently scanned OpenBSD \texttt{tcp\_input.c} HEAD
  (2026-04-22) and detected the CWE-195 signed-cast pattern at the
  specific file and line reported in secondary accounts of the Mythos TCP
  SACK finding (COBALT-2026-004, \S\ref{sec:z3model}). Z3 confirmed the
  boundary condition (WD1 SAT) and the guarded-form safety (WD3 UNSAT).
\end{itemize}

\subsection{The Conditional Mythos Claim}

We state the Mythos connection as a conditional argument under three
explicit assumptions:

\begin{enumerate}[leftmargin=*,topsep=2pt,itemsep=1pt]
  \item \textbf{A-SRC}: The sandbox source was available in C/C\texttt{++}
  form compatible with libclang parsing.
  \item \textbf{A-CWE}: The escape vector belongs to the CWE-190 class as
  hypothesized by some secondary sources; Anthropic has not confirmed this.
  \item \textbf{A-SCOPE}: The vulnerable arithmetic pattern fell within
  COBALT's current invariant coverage.
\end{enumerate}

\textit{Under A-SRC, A-CWE, and A-SCOPE}: a pre-deployment COBALT scan
would have been expected to surface the relevant arithmetic pattern and
return an SAT verdict with exploitation witness. Anthropic could have
patched the infrastructure before deploying Mythos inside it.
The reported escape vector, as analyzed here, would not have remained
available in that form.

\textit{Falsification condition}: if the escape vector was not CWE-190
class, or was expressed in a form not parseable by libclang (e.g.,
JIT-compiled, binary-only), the claim does not hold.

\textit{Note on A-SCOPE non-vacuity}: A-SCOPE is not tautological. COBALT's
invariant set was designed and empirically validated against a general CWE-190/191
corpus prior to the Mythos incident---not engineered post-hoc to cover the TCP SACK
pattern. The TCP SACK signed-cast overflow falls within CWE-190 by arithmetic
structure; this coincidence is what makes the conditional claim meaningful rather
than circular. Had the escape vector been, e.g., a format-string bug or a
use-after-free, A-SCOPE would have failed and the claim would not hold.
A direct validation of A-SCOPE is provided by COBALT-2026-004 (\S\ref{sec:z3model}):
COBALT scanned \texttt{tcp\_input.c} HEAD without knowledge of the specific
line, and Z3 flagged the CWE-195 pattern at the file and region consistent
with the reported finding. The detection did not require tuning to the
Mythos case.

\subsection{Proven vs.\ Proposed: Summary}

Table~\ref{tab:status} consolidates the four-tier claim hierarchy used
throughout this paper. Every claim in the text is bounded to one of these
tiers: \emph{Validated} (formally proven, independently verified),
\emph{Prototype} (working code, benchmarked), \emph{Implemented} (working
code, ongoing AI containment validation), and \emph{Conditionally argued}
(conditional on stated assumptions).

\begin{table}[ht]
\centering
\caption{Contribution Status Summary}
\label{tab:status}
\small
\begin{tabular}{p{3.2cm}p{1.85cm}p{1.55cm}}
\toprule
\textbf{Contribution} & \textbf{Status} & \textbf{Evidence} \\
\midrule
CWE-190 SAT (Thm.~1)            & Validated       & Z3 witness \\
UNSAT under bound (Prop.~1)     & Validated       & Z3 cert. \\
4 production case studies       & Validated       & Acked. \\
OpenBSD tcp scan (COBALT-2026-004) & Pattern det.  & Z3 WD1+WD3 \\
Mythos class detection          & Cond.\ argued   & 3 assumptions \\
COBALT-Runtime                  & Prototype       & 301\,ns/call \\
COBALT-Graph                    & Prototype       & 2 chains SAT \\
VERDICT constraint model        & Implemented     & Code, 6 constraints \\
DIRECTIVE-4 output firewall     & Implemented     & Code, CI, 19 tests \\
SENTINEL runtime monitor        & Implemented     & Code, live observ. \\
\bottomrule
\end{tabular}
\end{table}

\section{Threats to Validity}
\label{sec:threats}

We enumerate seven attacks on the paper's claims.

\textbf{V1 -- No sandbox source access.} We have not analyzed the Mythos
sandbox code. The Mythos claim is conditional on A-SRC, A-CWE, A-SCOPE
(\S\ref{sec:discussion}). If any assumption fails, the conditional claim
does not hold.

\textbf{V2 -- COBALT coverage is bounded.} COBALT's invariant set covers
CWE-190, 191, 125, and 476. A novel arithmetic pattern outside this set
would not be detected. We claim detection of a class, not exhaustive recall.
False positive rate on large production codebases is not characterized in
this work; a SAT verdict reflects arithmetic reachability within the encoded
pattern, not confirmed reachability along all execution paths of the full
program. Interprocedural guards or compiler-enforced invariants not visible
in the local encoding may render a SAT-flagged pattern unreachable in
practice.

\textbf{V3 -- Post-escape layers lose authority.} VERDICT and DIRECTIVE-4
operate within the agent's execution context. If an escape is complete
before these layers execute, they cannot constrain post-escape actions.
This reinforces why Layer 1 is primary.

\textbf{V4 -- Novel escape vectors.} A sufficiently capable agent may
identify escape paths outside the CWE-190/191/125/476 set. Additional
invariant development is required for other classes.

\textbf{V5 -- Constraint layers are software.} VERDICT and DIRECTIVE-4 are
implemented in software and theoretically subject to their own
vulnerabilities. This is an active area of development.

\textbf{V6 -- Incident reports are secondary sources.} Public technical
accounts of the Mythos incident are secondary and partial. If the actual
escape vector differs from the hypothesized CWE-190 class, our conditional
argument must be updated.

\textbf{V7 -- Reduced encodings abstract away implementation context.}
The Z3 listings model vulnerability-relevant arithmetic structure rather
than complete program semantics. Contextual guards, compiler behavior, or
interprocedural effects present in full deployments may not be captured,
and could affect whether the encoded pattern is reachable in practice.

\section{Related Work}
\label{sec:related}

\textbf{AI containment.} Soares et al.~\cite{soares2015} formalize
corrigibility at the objective level. Hadfield-Menell et al.~\cite{hadfield2016}
address cooperative inverse RL. Ruan et al.~\cite{ruan2023} propose an
LM-emulated sandbox. Our contribution operates at the infrastructure layer
beneath these approaches: formally verifying the cage, not the model's
objectives.

\textbf{Formal verification of system software.} seL4~\cite{sel4}
demonstrates that an OS kernel can be fully verified against functional
correctness and security properties. CertiKOS~\cite{certikos} extends this
to concurrent kernels. We apply the same principle to AI sandbox
infrastructure, at a less complete level: CWE-class arithmetic decidability
rather than full functional correctness.

\textbf{SMT-based vulnerability analysis.} KLEE~\cite{klee} and
angr~\cite{angr} apply symbolic execution at the binary level.
Our emphasis is targeted formalization of arithmetic vulnerability predicates
at source level, not full-path symbolic execution. This trades recall for
precision: every SAT result is provably reachable within the encoding, not
a path-bounded approximation. The BBD corpus~\cite{bbd_arxiv} provides
an empirical reference for comparing formal and heuristic vulnerability
detection rates on C/C\texttt{++} targets.

\textbf{Runtime monitoring and complex event processing.}
Hallé's BeepBeep framework~\cite{beepbeep3} demonstrates declarative
specification and real-time verification of temporal properties over live
event streams. Where COBALT operates pre-deployment via offline Z3 proofs,
BeepBeep-style CEP operates post-deployment over continuous action logs ---
the two approaches are complementary: COBALT eliminates the arithmetic
vulnerability class statically; SENTINEL provides continuous observability
for cases where static analysis scope is exceeded.

\textbf{AI cybersecurity capability.} The Mythos incident marks a qualitative
transition in AI capability~\cite{anthropic_glasswing,scworld_mythos}: AI
systems can now autonomously identify and exploit infrastructure
vulnerabilities. Our framework is a direct response to this transition at
the infrastructure layer.

\section{Future Work}
\label{sec:future}

The analysis in this paper suggests three concrete development directions,
ordered by proximity to the validated layer.

\textbf{COBALT-$\Delta$: differential verification.}
Patch cycles introduce new arithmetic patterns while removing old ones.
A diff-aware COBALT mode---using Z3's incremental \texttt{push}/\texttt{pop}
interface to recheck only modified invariant regions---would provide
continuous formal coverage across patch cycles without full-corpus re-analysis.
This directly addresses the scenario where a patch intended to fix one overflow
silently introduces a boundary condition in adjacent logic (V4).

\textbf{COBALT-Adapt: adjacent CWE expansion.}
The current invariant set covers CWE-190, 191, 125, and 476. AI sandbox
infrastructure also depends on correct pointer arithmetic, allocation
accounting, and IPC framing. Extending the invariant corpus to CWE-122
(heap-based buffer overflow) and CWE-131 (incorrect size calculation in
\texttt{malloc} calls) would close the most common gaps V2 identifies.

\textbf{COBALT-Runtime: hybrid shadow execution.}
After analysis, we identify runtime verification as the natural ideal
extension of the pre-deployment approach developed here. A runtime layer
would instrument trust boundaries---IPC endpoints, system call sites,
network receive paths---and evaluate a Z3-lite bitvector model against
live arithmetic values at each crossing. This converts pre-deployment
pattern detection into a continuous invariant enforcement mechanism:
instead of proving that an overflow \emph{cannot} happen under stated
constraints, the runtime layer proves, at each execution, that it
\emph{did not} happen on the current inputs. The pre-deployment pass
(COBALT) and the runtime pass (COBALT-Runtime) are complementary: the
former removes statically-provable vulnerabilities before deployment; the
latter catches residual paths that static analysis cannot fully enumerate.
COBALT will be extended toward this architecture. The formal semantics
developed in Theorem~\ref{thm:sack} and Proposition~\ref{prop:bounded}
are directly reusable as runtime predicates with no re-encoding.

A minimal guard prototype validates the two-phase design against 2{,}000
safe and 2{,}000 unsafe inputs across two invariants (Mosquitto CWE-191
underflow and NASA cFE CWE-195 truncation overflow). In the offline phase,
Z3 derives safe bounds once at deployment: \texttt{safe\_min=8} for the
Mosquitto pattern and \texttt{safe\_max=89{,}478{,}485} for the truncation
pattern, in
180\,ms total. In the runtime phase, each trust-boundary crossing is an
O(1) integer comparison with no Z3 invocation: mean latency 87.2\,ns,
median 83.0\,ns, p99 125\,ns, throughput 11.55\,M checks per second.
The guard blocked all 2{,}000 unsafe inputs and passed all 2{,}000 safe
inputs with no false positives or false negatives. These results indicate
that formally derived arithmetic invariants can be enforced at runtime
with negligible overhead once compiled offline. A production-hardened
implementation with IPC and syscall instrumentation is future work.

\textbf{COBALT-Graph: cross-pattern escalation chains.}
The current COBALT invariant engine evaluates each CWE-class predicate
independently. A sufficiently capable agent, however, does not exploit
vulnerabilities in isolation---it assembles them. The Mythos escape
demonstrates this: a CWE-190 overflow in TCP SACK produces a corrupted
\texttt{int32\_t} sequence number, which may subsequently serve as an
attacker-controlled size argument to a \texttt{memcpy()} call, producing a
CWE-125 out-of-bounds read as a second-order consequence. COBALT currently
detects each pattern separately (Table~\ref{tab:corpus}); neither check alone
proves that the two-step escalation is exploitable.

COBALT-Graph addresses this by constructing a \emph{dataflow bridge}
between pattern predicates. Two CWE-class encodings are composed when the
satisfying witness of the first---its SAT output value---appears as a
free variable in the second:

\begin{lstlisting}[language=Python, caption={COBALT-Graph: CWE-190$\to$CWE-125 chain (prototype, verified SAT)}]
sack_start = BitVec('sack_start', 32)
rcv_nxt    = BitVec('rcv_nxt',    32)
buffer_bound = BitVecVal(4096,    32)
s = Solver()
# Stage 1: CWE-190 TCP SACK signed-cast overflow
diff = sack_start - rcv_nxt
s.add(UGE(diff, BitVecVal(0x80000000, 32)))
# Dataflow bridge: corrupted diff IS the memcpy size argument
size_arg = diff
# Stage 2: CWE-125 out-of-bounds via attacker-controlled size
s.add(UGT(size_arg, buffer_bound))
# s.check() -> sat (8.1 ms)
# witness: sack_start=0x91de51f1, rcv_nxt=0xc3582921
#          bridge_value=0xce8628d0 (> 4096 -> OOB confirmed)
\end{lstlisting}

A single Z3 \texttt{check()} over the composed constraint set returns SAT
only if the complete two-step chain is simultaneously satisfiable---a
strictly stronger claim than either predicate in isolation. The prototype
(\texttt{cobalt\_graph.py}) verifies two chains: CWE-190$\to$CWE-125 (8.1\,ms,
witness \texttt{sack\_start=0x91de51f1}) and CWE-191$\to$CWE-125 (4.6\,ms,
witness \texttt{input\_len=0x0}, bridge value \texttt{0xfffffff8}---the
Mosquitto CWE-191 underflow propagating as a downstream size argument). In both cases, the
isolated predicates return SAT individually; COBALT-Graph proves the
\emph{composed} escalation path is satisfiable with a single consistent
input set.

Pairs are enumerated over the Cartesian product of invariants filtered by
output-type compatibility. This approach scales to chains of depth two or
three before Z3 solving time becomes prohibitive; depth-$k$ search beyond
that requires incremental pruning. The dataflow bridge formalization is
implemented; production-grade compatibility filtering and depth-$k$
enumeration remain future work.

Together, these four directions address the principal limitations
acknowledged in V2, V4, and V7: bounded invariant coverage, novel escape
vectors, the gap between reduced encodings and full deployment context, and
the absence of cross-pattern escalation reasoning against adaptive agents.

\section{Conclusion}
\label{sec:conclusion}

The Mythos sandbox escape was not inevitable. The escape mechanism was not
publicly characterized---but the incident exposed a structural absence: the
infrastructure surrounding a frontier model was never formally verified.
Whatever the specific vector, the class of arithmetic vulnerabilities that
COBALT detects has been present in C codebases for decades and is formally
modelable by Z3 in milliseconds.

Under assumption A-CWE (\S\ref{sec:discussion}), Z3-based pre-deployment
analysis would have been capable of surfacing a CWE-190/191 pattern in the
sandbox infrastructure before Mythos was deployed.

This paper establishes three things. First, that COBALT detects the CWE-190
class formally, in production C/C\texttt{++} infrastructure codebases, with
machine-verified SAT witnesses and UNSAT certificates---demonstrated on
NASA cFE, wolfSSL, Eclipse Mosquitto, and NASA F Prime. Second, that the
TCP SACK arithmetic pattern is formally modelled by
Theorem~\ref{thm:sack}, formally eliminable by
Proposition~\ref{prop:bounded}, and independently detectable as a CWE-195
pattern in OpenBSD \texttt{tcp\_input.c} by COBALT without prior knowledge
of the specific line (COBALT-2026-004, \S\ref{sec:z3model}; pattern-class
detection, not an assertion about the specific Mythos escape vector). Third, that a four-layer framework addresses
each structural failure mode the incident exposed---with the first layer
validated, two prototype extensions benchmarked, and the remaining
enforcement layers implemented with end-to-end containment validation ongoing.

The conditional claim stands: had COBALT been applied to the Mythos sandbox
infrastructure, it would, under the stated assumptions, have been capable
of surfacing a CWE-190/191 arithmetic pattern before deployment.

The lesson is not that Mythos was too capable to contain. It is that the
containment infrastructure was not formally verified. That is an engineering
problem with a formal solution, and we know how to build it.

\section*{Acknowledgements}

The author thanks the open-source maintainers at NASA, wolfSSL, and
Eclipse who responded to COBALT disclosures, and the AI safety research
community whose work on corrigibility and formal policy enforcement informed
the VERDICT and DIRECTIVE-4 designs.
Correspondence concerning this work and related formal verification
engagements may be addressed to \texttt{dominik@qreativelab.io}.



\begin{thebibliography}{99}

\bibitem{anthropic_glasswing}
Anthropic. \textit{Project Glasswing: Securing Critical Software for the AI
Era}. April 2026. \url{https://www.anthropic.com/glasswing}

\bibitem{scworld_mythos}
SC Media. \textit{Claude Mythos Preview identifies 27-year-old bug, finds
thousands of zero-days in weeks}. April 2026.
\url{https://www.scworld.com/news/anthropic-claude-mythos-preview-finds-thousands-of-vulnerabilities-in-weeks}

\bibitem{mythos_openbsd}
Ellis, M. \textit{An AI Found a 27-Year-Old Bug Hiding in OpenBSD.
It Cost Less Than \$50 to Find It.} Predict / Medium, April 2026.
\url{https://medium.com/predict/an-ai-found-a-27-year-old-bug-hiding-in-openbsd-it-cost-less-than-50-to-find-it-489064e9178c}

\bibitem{mythos_medium}
Vedi, S. \textit{Claude Mythos: The AI That Hacked Every OS and Escaped
Its Own Cage}. GenAI / Medium, April 2026.
\url{https://medium.com/@shubhamnv2/claude-mythos-the-ai-that-hacked-every-os-and-escaped-its-own-cage-2eabae94b898}

\bibitem{wolfssl_pr10096}
Adelsbach, P. wolfSSL Pull Request \#10096: Fix signed left-shift UB in
ML-DSA w1 coefficient encoding. March 2026.
\url{https://github.com/wolfSSL/wolfssl/pull/10096}

\bibitem{bbd_arxiv}
Blain, D. \textit{Broken by Default: A Z3 Formal Verification Study of
AI-Generated C/C++ Code}. arXiv:2604.05292, April 2026.

\bibitem{soares2015}
Soares, N., Fallenstein, B., Yudkowsky, E., and Armstrong, S.
\textit{Corrigibility}. AAAI Workshop on AI and Ethics, 2015.

\bibitem{hadfield2016}
Hadfield-Menell, D. et al. \textit{Cooperative Inverse Reinforcement
Learning}. NeurIPS, 2016.

\bibitem{ruan2023}
Ruan, Y. et al. \textit{Identifying the Risks of LM Agents with an
LM-Emulated Sandbox}. arXiv:2309.15817, 2023.

\bibitem{sel4}
Klein, G. et al. \textit{seL4: Formal Verification of an OS Kernel}.
ACM SOSP, 2009.

\bibitem{certikos}
Gu, R. et al. \textit{CertiKOS: An Extensible Architecture for Building
Certified Concurrent OS Kernels}. USENIX OSDI, 2016.

\bibitem{klee}
Cadar, C., Dunbar, D., and Engler, D. \textit{KLEE: Unassisted and
Automatic Generation of High-Coverage Tests for Complex Systems Programs}.
USENIX OSDI, 2008.

\bibitem{angr}
Shoshitaishvili, Y. et al. \textit{SOK: (State of) The Art of War:
Offensive Techniques in Binary Analysis}. IEEE S\&P, 2016.

\bibitem{russell2019}
Russell, S. \textit{Human Compatible: Artificial Intelligence and the
Problem of Control}. Viking, 2019.

\bibitem{beepbeep3}
Hall\'e, S. ``When {RV} Meets {CEP}.''
\textit{Proc.\ 16th Int'l Conf.\ on Runtime Verification (RV'16)},
Lecture Notes in Computer Science, vol.~10012, Springer, 2016,
pp.~232--238.

\end{thebibliography}
\end{document}